\patchcmd{\maketitle}{\@copyrightspace}{}{}{}
\DeclareMathAlphabet{\mathcal}{OMS}{cmsy}{m}{n}
\newif\iftr\trtrue
\setlist{noitemsep,topsep=1pt,parsep=1pt,partopsep=2pt,leftmargin=*}
\newcommand{\ignore}[1]{}
\newtheorem{claim}{Claim}
\newtheorem{definition}{Definition}
\newtheorem{theorem}{Theorem}
\def\BState{\State\hskip-\ALG@thistlm}
\algrenewcommand\textproc[1]{\textsf{#1}}
\algrenewcommand\algorithmicfunction{{}}
\newcommand{\msf}[1]{\ensuremath{{\mathsf {#1}}}}
\renewcommand{\mtt}[1]{\ensuremath{{\mathtt {#1}}}}
\newcommand{\mcal}[1]{\ensuremath{{\mathcal {#1}}}}
\renewcommand{\P}{\mcal{P}}
\newcommand{\field}{{\ensuremath{\mathbb{F}_p}}}
\title{Brief Note: Asynchronous Verifiable Secret Sharing with Optimal Resilience and Linear Amortized Overhead}
\author{Aniket Kate \\ Purdue \and Andrew Miller \\ UIUC \and Thomas Yurek \\ UIUC}
\begin{document}

\maketitle

\begin{abstract}
  In this work we present $\msf{hbAVSS}$, the Honey Badger of Asynchronous Verifiable Secret Sharing (AVSS) protocols --- an AVSS protocol that guarantees linear amortized communication overhead even in the worst case tolerating $f < n/3$ Byzantine faults.
 The best prior work can achieve linear overhead only at a suboptimal resilience level ($t < n/4$) or by relying on optimism (falling back to quadratic overhead in case of network asynchrony or Byzantine faults).
  Our protocol therefore closes this gap, showing that linear communication overhead is possible without these compromises.
  The main idea behind our protocol is what we call the \emph{encrypt-and-disperse} paradigm: by first applying ordinary public key encryption to the secret shares, we can make use of highly efficient (but not confidentiality preserving) information dispersal primitives. We prove our protocol secure under a static computationally bounded adversary model.
\end{abstract}

\section{Introduction}
In a verifiable secret sharing (VSS) protocol, a dealer shares a secret among a set of $n$ parties, such that $t+1$ honest parties can reconstruct the secret.
VSS forms the basis for fault tolerant file storage~\cite{dispersal,pasis}, shared databases~\cite{savss}, and many other applications. It is also an essential component of secure multiparty computation (MPC) protocols, used both for generating random preprocessing elements and for accepting secret-shared inputs from untrusted clients.

The verifiability property means that if any party gets their shares, then every correct party also receives a valid share.
This is essential when VSS is used as input to MPC, since parties need to assume the shares will be available in order to make irrevocable actions (such as revealing intermediate outputs of a computation).
The challenge is that a faulty dealer may provide invalid data to some but not all of the parties.
The main idea behind nearly all VSS protocols, starting from Feldman et al.,~\cite{feldman}, is to broadcast a polynomial commitment, enabling parties to individually validate their shares.

In the case of synchronous VSS, we can simply wait to hear a confirmation from all $n$ parties, or else abort.
The asynchronous VSS case is more difficult since we must proceed after hearing from only $n-f$ of the parties, where $f$ is the number of parties that undergo a Byzantine fault.
Since crashed nodes are indistinguishable from slow nodes, it could be that $f$ of the ones we waited for are corrupted, hence only $n-2f$ correct parties received valid shares.

In order to cope with asynchrony, AVSS protocols typically distribute shares with additional redundancy, such that parties who received invalid shares can recover their shares through interaction with the others. This recovery process either results in extra communication overhead~\cite{avss}, or else relies on loosened resilience guarantees~\cite{chp}.

Improvements to AVSS have made use of concise polynomial commitments based on pairing cryptography~\cite{eavss}.
Most notably, in recent work, Basu et al.~\cite{savss} present an optimistic AVSS protocol that achieves linear communication overhead in the typical case, but in the presence of Byzantine faults or network asynchrony may fall back to quadratic overhead.
The goal of this paper is to present a protocol that provides linear guarantees even in the worst case.

{\bf Overview of our solution.}
The main idea behind our approach is a technique we call \emph{encrypt-then-disperse}, inspired by a related application in HoneyBadgerBFT~\cite{honeybadgerbft}.
The secret share encoding and polynomial commitments are as usual. However, before transmitting, the secret shares are first encrypted using public key encryption.
Next the encrypted payload is dispersed using an information dispersal routine, which can guarantee robustness and efficiency since it does not have to provide secrecy.

The use of information dispersal guarantees that every honest node receives \emph{some} data, even in the asynchronous setting. If it turns out to be invalid, then it can be used as evidence to implicate the leader.
Once the dealer is determined to be faulty, we enter a share recovery phase, which ensures every correct party receives their share. The share recovery phase can be very efficient too, since it does not need to ensure confidentiality at all, since it can only be initiated once the dealer is determined to be faulty.
A summary comparison of our results to related work is given in Table~\ref{tbl:summary}.

\begin{table}
  \centering
  \caption{Amortized communication overhead and resilience of AVSS protocols}
  \label{tbl:summary}
  \begin{tabular}{r|c|cc|}
    \multirow{2}{*}{\bfseries Protocol} &
    \multirow{2}{*}{\bfseries Resilience} & 
    \multicolumn{2}{|c|}{\bfseries Comm. Overhead} \\ \cline{3-4}
    && Typical & Worst \\ \hline
    hbAVSS (ours)      & $t<n/3$   & \multicolumn{2}{|c|}{$O(N)$}   \\
    sAVSS~\cite{savss}       & $t<n/3$   & $O(N)$ & $O(N^2)$              \\
    eAVSS~\cite{eavss}       & $t<n/3$   & \multicolumn{2}{|c|}{$O(N^2)$} \\
    AVSS~\cite{avss}         & $t<n/3$   & \multicolumn{2}{|c|}{$O(N^3)$} \\
    \msf{Sh}~\cite{chp}      & $t<n/4$   & \multicolumn{2}{|c|}{$O(N)$}   \\
\end{tabular}
\end{table}

\section{Preliminaries}

\subsection{Asynchronous Network Model}
Throughout this paper we assume the standard asynchronous network model. 
We assume a fixed set of $n$ communicating parties $\P_1,...,\P_n$, as well as a dealer $D$.
We consider a static Byzantine corruption model. The dealer and up to $f < n/3$ of the parties may be corrupted, in which case they are controlled entirely by the adversary.
The parties are connected by pairwise authenticated channels. Messages between uncorrupted parties are guaranteed eventually to be delivered, although the order and timing of delivery of messages is determined by the adversary. We assume a computationally bounded adversary that is unable to break cryptographic primitives.

\subsection{Asynchronous Verifiable Secret Sharing}
Here we give the security definition for our construction:
\begin{definition}(Asynchronous Verifiable Secret Sharing (AVSS))
  In an AVSS protocol, the dealer $D$ receives input $s \in \field$, and each party $P_i$ receives an output share $\phi(i)$ for some degree-$t$ polynomial $\phi : \field \rightarrow \field$.
\label{def:avss}
\end{definition}  

\noindent  The protocol must satisfy the following properties:
\begin{itemize}
\item \textbf{Correctness}: If the dealer $D$ is correct, then all correct parties eventually output a share $\phi(i)$ where $\phi$ is a random polynomial with $\phi(0) = s$.
\item \textbf{Secrecy}: If the dealer $D$ is correct, then the adversary learns no information about $\phi$ except for the shares of corrupted parties.
\item \textbf{Agreement}: If any correct party receives output, then there exists a unique degree-$t$ polynomial $\phi'$ such that each correct party $\P_i$ eventually outputs $\phi'(i)$.
\end{itemize}

For simplicity, this definition is written to be specific to Shamir sharing, though a more generic definition would be possible~\cite{eavss}.
Our agreement property is written to incorporate the \emph{strong commitment} property from Backes et al.~\cite{eavss}, in which the secret-shared value must be determined by the time that the first correct party outputs a share (and cannot be influenced thereafter by the adversary).

\subsection{Polynomial Commitments}
\label{sec:polycommit}
Polynomial commitments are an interface by which a committer can create a commitment to a polynomial as well as witnesses to its evaluation at different points, so to prove that evaluations are correct without revealing the full polynomial.
Polynomial commitments have been implicit in all cryptographic VSS protocols since Feldman~\cite{feldman}, but were first formalized by Kate et al.~\cite{kate2010constant}
We use the scheme from Kate et al. because it gives commitments that are additively homomorphic and constant-sized.

\begin{definition}(PolyCommit (c.f.~\cite{kate2010constant})
  Let $(\field)_\kappa$ be a family of finite fields indexed by a security parameter $\kappa$ (we'll typically omit $\kappa$ and just write $\field$). A \msf{PolyCommit} scheme for $\field$ consists of the following algorithms:
\end{definition}
\label{def:polycommit}

\begin{description}
\item [$\msf{Setup}(1^\kappa, t)$] generates system parameters $\msf{SP}$ to commit to a polynomial over $\field$ of degree bound $t$.
  $\msf{Setup}$ is run by a trusted or distributed authority. $\msf{SP}$ can also be standardized for repeated  use.
\item [$\msf{PolyCommit}(\msf{SP},\phi(\cdot))$] outputs a commitment $C$ to a polynomial
  $\phi(\cdot)$ for the system parameters \msf{SP}, and some associated decommitment information $\msf{aux}$. 
\item [$\msf{CreateWitness}(\msf{SP},\phi(\cdot),i,\msf{aux})$] outputs $\langle i,\phi(i),{w}_i
  \rangle $, where ${w}_i$ is a witness for the decommitment information for the
  evaluation $\phi(i)$ of $\phi(\cdot)$ at the index $i$.
\item [$\msf{VerifyEval}(\msf{SP},C, i,\phi(i), w_i)$] verifies that $\phi(i)$ is indeed the evaluation at the index $i$ of the polynomial committed in $C$. If so, the algorithm outputs \textit{accept}, otherwise it outputs \textit{reject}.
\end{description}
A \msf{PolyCommit} scheme must satisfy the following properties:
\begin{itemize}
    \item \textbf{Correctness}: If $C,\msf{aux} \leftarrow \msf{Commit}(\msf{SP},\phi(\cdot))$ and $w_i, \msf{aux}_i \leftarrow \msf{CreateWitness}(\msf{SP},\phi(\cdot),i,\msf{aux})$, then the correct evaluation of $\phi(i)$ is successfully verified by $\msf{VerifyEval}(\msf{SP},C, i,\phi(i), w_i, \msf{aux}_i)$.
    
    \item \textbf{Polynomial Binding}: If $C,\msf{aux} \leftarrow \msf{Commit}(\msf{SP},\phi(\cdot))$, then except with negligible probability, an adversary can not create a polynomial $\phi'(\cdot)$ such that $\msf{VerifyPoly}(\msf{SP},C,\phi(\cdot)', \msf{aux}) = 1$ if $\phi(\cdot) \neq \phi'(\cdot)$.
    
    \item \textbf{Evaluation Binding}: If $C,\msf{aux} \leftarrow \msf{Commit}(\msf{SP},\phi(\cdot))$ and $w_i, \msf{aux}_i \leftarrow \msf{CreateWitness}(\msf{SP},\phi(\cdot),i,\msf{aux})$ then except with negligible probability, an adversary can not create an evaluation $\phi(j)$, witness $w_j$, and decommitment information $\msf{aux}_j$ such that $\msf{VerifyEval}(\msf{SP},C, i,\phi(j), w_j, \msf{aux}_j) = 1$ if $i \neq j$.
    
    \item \textbf{Hiding}: Given {$C$ and $w_i$} for any $i$, an adversary either
    \begin{itemize}
        \item Can only determine $\phi(\cdot)$ or $\phi(i)$ with negligible probability given bounded computation \textit{(Computational Hiding)}
        \item Can not determine any information about $\phi(\cdot)$ or $\phi(i)$, even given unbounded computation \textit{(Unconditional Hiding)}
    \end{itemize}    
\end{itemize}

We additionally require that the commitments and witnesses be \textit{additively homomorphic}. This allows us to create new commitments and witnesses through interpolation, a property we rely on in our AVSS construction.

\begin{itemize}
    \item \textbf{Additive Homomorphism}: Given commitments $C_a$ and $C_b$ to polynomials $\phi_a(\cdot)$ and $\phi_b(\cdot)$ respectively, there should be an efficient operation to compute $C_{a+b}$, the commitment to $\phi_a(\cdot) + \phi_b(\cdot)$. Additionally, given $w_{i,a}$ and $w_{i,b}$, the witnesses for the evaluations of $\phi_a(\cdot)$ and $\phi_b(\cdot)$ at $i$ respectively, it should be similarly efficient to compute $w_{i,a+b}$. Lastly, it should also be efficient to compute $w_{i+j,a}$ from $w_{i,a}$ and $w_{j,a}$.
\end{itemize}

In this work we use PolyCommitPed from Kate et al.\cite{kate2010constant}, which provides a constant-sized commitment that achieves unconditional hiding as well as our desired homomorphic properties. We also note that PolyCommitPed achieves unconditional hiding through the use of a hiding polynomial, which we notate as $\msf{aux}$ in this work. As $\msf{aux}$ is instantiated as a polynomial over a finite field, it too realizes our desired property of additive homomorphism.

\subsection{Asynchronous Verifiable Information Dispersal}
Our protocol relies on an information dispersal protocol as defined below. Our definition is for a batch, such that $M$ messages $v_1,...,v_M$ are dispersed at once and can be individually retrieved. 
\begin{definition}
(Asynchronous Verifiable Information Dispersal (AVID))
A $(t+1,n)$ AVID scheme $\msf{AVID}$ for $M$ values is a pair of protocols $(\msf{Disperse},\msf{Retrieve})$ that satisfy the following with high probability:
\end{definition}
\begin{itemize}
\item {\bf Termination:} If the dealer $D$ is correct and initiates $\msf{Disperse}(v_1,...,v_M)$, then every correct party eventually completes $\msf{Disperse}$
\item {\bf Agreement:} If any correct party completes $\msf{Disperse}$, all correct parties eventually complete $\msf{Disperse}$.
\item {\bf Availability:} If $t+1$ correct parties have completed $\msf{Disperse}$, and some correct party initiates $\msf{Retrieve}(i)$, then the party eventually reconstructs a message $v_i'$.
\item {\bf Correctness:} After $t+1$ correct parties have completed $\msf{Disperse}$, then for each index $i \in [M]$ there is a single value $v_i$ such that if a correct party receives $v'_i$ from $\msf{Retrieve}(i)$, then $v'_i = v_i$. Furthermore if the dealer is correct, then $v_i$ is the value input by the dealer.
\end{itemize}

Hendricks et al.~\cite{avidfp} present \msf{AVID-FP}, an AVID protocol whose total communication complexity is only $O(|v|)$ in \msf{disperse} phase for a sufficiently large batch $v >> n$, i.e. it achieves only constant communication overhead.

\subsection{Reliable Broadcast}
Reliable broadcast~\cite{bracha} is a primitive that enables a dealer $D$ to broadcast a message $v$ to every party. Regardless of if the dealer is correct, if any party receives some output $v'$ then every party eventually receives $v'$.
Reliable broadcast is a special case of information dispersal, where each party simply begins \msf{Retrieve} immediately after $\msf{Disperse}$ completes.
In fact, all efficient protocols we know of, such as Cachin and Tessaro~\cite{dispersal} or Duan et al.,~\cite{duan2018beat}, are built from an AVID protocol.
We therefore skip the definition but use the $\msf{ReliableBroadcast}$ syntax in our protocol description as short hand for $\msf{Disperse}$ followed by all parties immediately beginning $\msf{Retrieve}$.

\subsection{Public Key Encryption}
We make use of a semantically secure public key encryption scheme, $(\msf{Enc},\msf{Dec})$, such that $\msf{Enc}_{\msf{PK}}(m)$ produces a ciphertext encrypted under public key $\msf{PK}$, while $\msf{Dec}_{\msf{SK}}(c)$ decrypts the message using secret key $\msf{SK}$. We assume a PKI, such that each party $\P_i$ already knows $\msf{SK}_i$.
We also assume that each public key is a function of the secret key, written $\msf{PK} = g^\msf{SK}$, which we make use of by revealing secret key during the dealer implication phase.

We note that while in our presentation we only consider a single session with a single dealer, for a practical deployment, we would want to derive per-session keys from a single long-term keypair and allow a recipient to present the session key along with a proof of its correctness, rather than reveal her secret key (and consequently need to update her key in the PKI).
\section{The \textnormal{\textsf{hbAVSS}} Protocol}

\subsection{Protocol description}
At a high level, the \msf{hbAVSS} protocol consists of the following steps:
\begin{enumerate}
\item Dealer's phase: the dealer creates Shamir sharings for $t+1$ secrets and broadcasts $t+1$ commitments to the polynomials that encode them. The dealer then encrypts each party's shares using their public encryption keys, and disperses the encrypted payloads.
\item Share validation: each party retrieves their encrypted payload, and then attempts to decrypt and validate their shares against the polynomial commitments. If sufficiently many parties successfully receive valid shares, then the shares are output.
\item Implicating a faulty dealer: if any party finds that the shares they receive are invalid or fail to decrypt, they reveal their secret key, enabling the other parties to confirm that the dealer was faulty.
\item Share recovery: once the dealer is implicated as faulty, the parties who did receive valid shares distribute them to enable the remaining parties also to reconstruct their shares.
\end{enumerate}

We now explain these steps in more detail. The protocol pseudocode is given in Algorithm~\ref{alg:hbavss}, and the narration below refers to the schematic illustration in Figure~\ref{fig:hbavss}.

\begin{figure}
  \centering
  \includegraphics[width=\columnwidth]{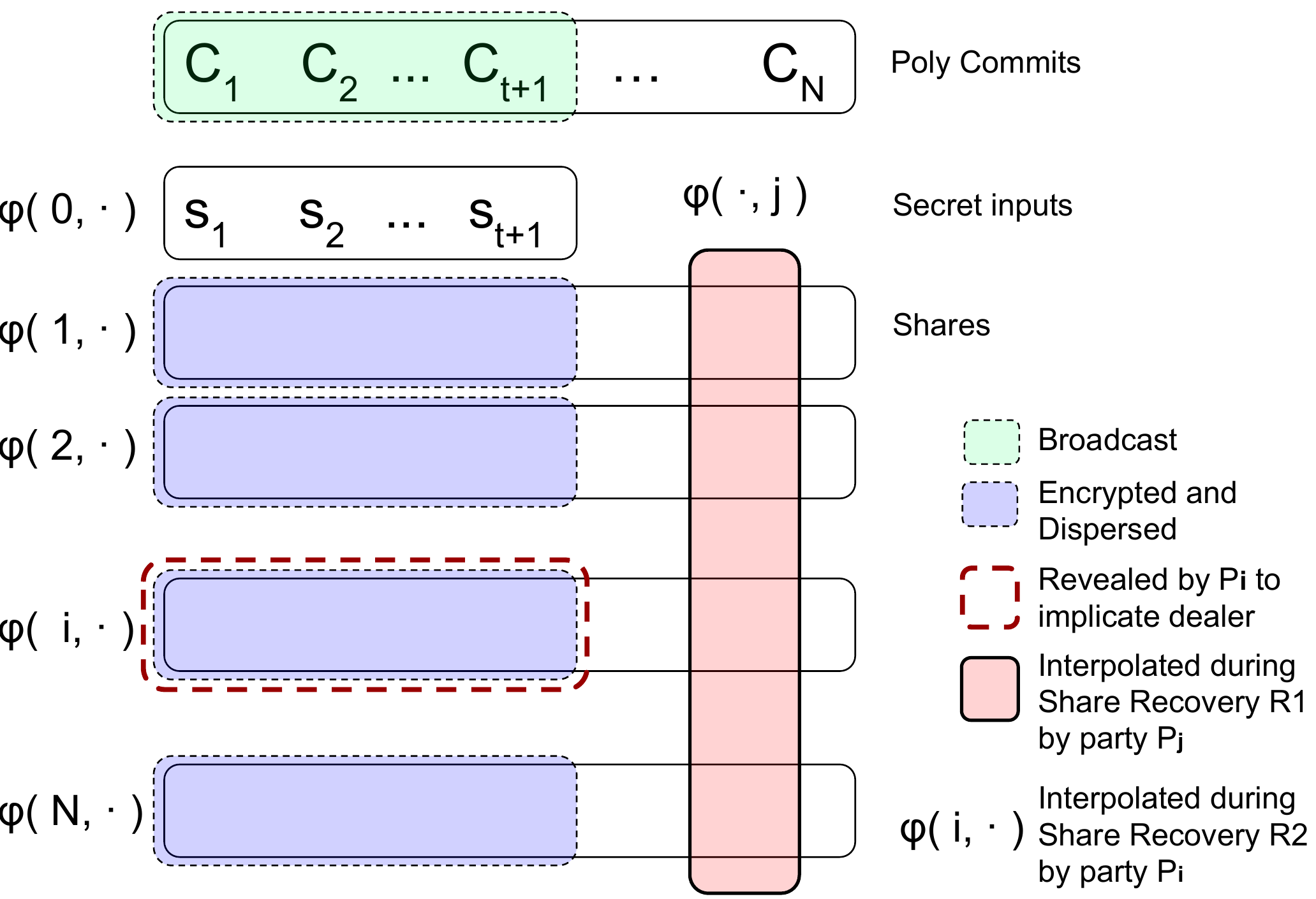}
  \label{fig:hbavss}
  \caption{Illustration of \msf{hbAVSS}. A batch of secrets $s_1,...,s_N$ are encoded as a degree-$(t,t)$ bivariate polynomial $\phi(\cdot,\cdot)$. Each party $\P_i$'s shares $\phi(i,\cdot)$ are encrypted, then dispersed. Polynomial commitments to the sharings are broadcast, enabling parties to validate their shares. If a party $\P_i$ detects an invalid share, they publish their decryption key, enabling the other parties to confirm the dealer is faulty. Invalid shares are then recovered in two steps: (R1) each party $\P_j$ receives valid shares to interpolate one column $\phi(\cdot,j)$, then (R2) each party $\P_i$ reconstructs their shares $\phi(i,\cdot)$.}
\end{figure}

\begin{algorithm*}[t]
  \caption {$\msf{hbAVSS}(D, \P_1, ..., \P_N)$ for dealer $D$ and parties $\P_1,...,\P_N$}
  \label{alg:hbavss}
  \small
  \begin{minipage}[t]{0.9\columnwidth}
    Setup:
    \begin{algorithmic}[1]
      \State Each party begins $\P_i$ with $\msf{SK}_i$ such that $\msf{PK}_i = g^{\msf{SK}_i}$
      \State The set of all $\{\msf{PK}_j\}_{j\in[N]}$ are publicly known
      \State Set up the polynomial commitment $\msf{SP} \leftarrow \msf{Setup}(t)$
    \end{algorithmic}
    \hrulefill \\
    As dealer $D$ with input $(s_1,...,s_{t+1})$:
    \begin{algorithmic}[1]
      \setcounter{ALG@line}{100}
      \Statex {\vspace{5pt} \it // Secret Share Encoding}
      \label{line:hbavss:sharing}
      \State Sample a random degree-$(t,t)$ bivariate polynomial $\phi(\cdot,\cdot)$ such that each $\phi(0,k) = s_k$ and $\phi(i,k)$ is $\P_i$'s share of $s_k$

      \Statex {\vspace{5pt} \it // Polynomial Commitment}
      \For {$k \in [t+1]$}
        \State $C_k,\msf{aux}_k \leftarrow \msf{PolyCommit}(SP, \phi(\cdot,k) )$
      \EndFor
      \State $\msf{ReliableBroadcast}^*(\{C_k\}_{k \in [t+1]})$
      \label{line:hbavss:sharingend}

      \Statex {\vspace{5pt} \it // Encrypt and Disperse}
      \For {each $\P_i$ and each $k \in [t+1]$}
        \State $w_{i,k} \leftarrow \msf{CreateWitnesss(C_k,\msf{aux}_k,i)}$
        \State $z_{i,k} \leftarrow \msf{Enc}_{\msf{PK}_i}(\phi(i,k) \| w_{i,k})$
      \EndFor
      \State $\msf{Disperse}^*(\{z_{1,k}\}_{k \in [t+1]}, ...,
         \{z_{N,k}\}_{k \in [t+1]})$
       \end{algorithmic}

       \hrulefill \\
    As receiver $\P_i$:
    \begin{algorithmic}[1]
      \setcounter{ALG@line}{200}
      \Statex {\vspace{5pt} \it // Wait for broadcasts}
      \State Wait to receive $\{C_k\}_{k \in [t+1]} \leftarrow \msf{ReliableBroadcast}^*$
      \State Wait for $\msf{Disperse}^*$ to complete
      
      \Statex {\vspace{5pt} \it // Decrypt and validate}
      \State $\{z_{i,k} \}_{k \in [t+1]} \leftarrow \msf{Retrieve}(i)$
      \label{line:hbavss:retrieve}
      \For {$k \in [t+1]$}
        \State $\phi(i,k) \| w_{i,k} \leftarrow \msf{Decrypt}_{\msf{SK}_i}(z_{i,k})$
        \If {$\msf{VerifyEval}(C_k, i, \phi(i,k), w_{i,k}) \neq 1$}
          \State {\bf sendall} $(\mtt{IMPLICATE}, \msf{SK}_i, k)$
        \EndIf
      \EndFor
      \State {\bf if} all shares were valid {\bf then} {\bf sendall} \mtt{OK}
      \label{line:hbavss:sharesvalid}
    \end{algorithmic}
  \end{minipage}\hspace{0.1\columnwidth}%
  \begin{minipage}[t]{0.9\columnwidth}
    \vspace{1pt}
    As receiver $\P_i$ (continued)
    \begin{algorithmic}[1]
      \setcounter{ALG@line}{300}
      \Statex {\vspace{5pt} \it // Bracha-style agreement}
      \State On receiving \mtt{OK} from $2t+1$ parties,
      \Indent
        \State {\bf sendall} \mtt{READY}
        \label{line:hbavss:okreceived}
      \EndIndent
      \State On receiving \mtt{READY} from $t+1$ parties,
      \Indent
        \State {\bf sendall} \mtt{READY} (if haven't yet)
        \label{line:hbavss:readyamplify}
      \EndIndent
      \State Wait to receive \mtt{READY} from $2t+1$ parties,
      \label{line:hbavss:readyok}
      \Indent
        \If {all shares were valid}
          \State {\bf output} shares $\{\phi(i,k)\}_{k\in[t+1]}$
          \label{line:hbavss:output1}
        \EndIf
      \EndIndent

      \setcounter{ALG@line}{400}
      \Statex {\vspace{5pt} \it // Handling Implication}
      \State On receiving $(\mtt{IMPLICATE}, \msf{SK}_j, k)$ from some $\P_j$,
      \Indent
        \State Discard if $\msf{PK}_j \neq g^{\msf{SK}_j}$
        \State $\{ ..., z_{j,k}, ...\} \leftarrow \msf{Retrieve}(k)$
        \vspace{6pt}
        \State $\overline{\phi(j,k)},w_{j,k} \leftarrow \msf{Decrypt}_{\msf{SK}_j}(z_{j,k})$
        \vspace{6pt}
        \State Discard if $\msf{VerifyEval}(C_k, j, \overline{\phi(j,k)}, w_{j,k}) = 1$
        \label{line:hbavss:recoveryreject}
        \State Otherwise proceed to {\it Share Recovery} below
      \EndIndent
      
      \Statex {\vspace{5pt} \it // Share Recovery}
      \setcounter{ALG@line}{500}
      \State Interpolate commitments $\{C_k\}_{k\in[N]}$ from $\{C_k\}_{k\in[t+1]}$
      \label{line:hbavss:interpcommit}
      \If {we previously received valid shares (line~\ref{line:hbavss:output1})}
        \label{line:hbavss:recover1}
        \State Interpolate witnesses $\{w_{i,k}\}_{k\in[N]}$ from $\{w_{i,k}\}_{k\in[t+1]}$
        \label{line:hbavss:interpwitness}
        \For {each $\P_j$}
          \State {\bf send} $(\mtt{R1}, \phi(i,j), w_{i,j})$ to $\P_j$
        \EndFor
        \label{line:hbavss:recover2}
      \EndIf
      \State On receiving $(\mtt{R1}, \phi(k,i), w_{k,i})$ from $t+1$ parties
      such that $\msf{VerifyEval}(C_i, k, \phi(k,i), w_{k,i}) = 1$
      \Indent
        \State Interpolate $\phi(\cdot, i)$
        \For {each $\P_j$}
          \State {\bf send} $(\mtt{R2}, \phi(j, i))$ to $\P_j$
        \EndFor
      \EndIndent
      \State On receiving $(\mtt{R2}, \phi(i, k))$ from at least $2t+1$ parties,
      \Indent
        \State Robustly interpolate $\phi(i, \cdot)$
        \State {\bf output} shares $\{\phi(i,k)\}_{k \in [t+1]}$
        \label{line:hbavss:output2}
      \EndIndent
    \end{algorithmic}
  \end{minipage}
  \vspace{7pt} \\
  $~^*$Note: To avoid clutter, the protocol is written to share a batch of exactly $t+1$ secret values. To achieve linear communication overhead, the $\msf{ReliableBroadcast}$ and $\msf{Disperse}$ instances should be shared among a batch of several simultaneously executing instances, as explained in Section~\ref{sec:performance}.  
\end{algorithm*}


\paragraph*{1) Sharing and committing}:
The protocol shares a batch of $t+1$ inputs at a time, $\{s_{1},...,s_{t+1}\}$.
The dealer creates a degree-$t$ Shamir sharing  $\phi(\cdot,k)$ for each input such that $\phi(0,k) = s_k$, and each party $P_i$'s share of $s_k$ is $\phi(i,k)$.
We visualize this as a matrix, with each party's shares forming a row as illustrated in Figure~\ref{fig:hbavss}. Later, if share recovery is needed, we make use of $\phi(\cdot,\cdot)$ as a degree-$(t,t)$ bivariate polynomial.

The dealer then uses \msf{PolyCommit} to create a commitment $C_k$ to each sharing $\phi(\cdot,k)$. The commitments are then broadcast, ensuring all the parties can validate their shares consistently.

Next, for each share $k$ and party $\P_i$, the dealer creates an encrypted payload $z_{i,k}$, consisting of the shares $\phi(i,k)$ and the polynomial evaluation witness $w_{i,k}$, encrypted under $\P_i$'s public key $\msf{PK}_i$.
The dealer then $\msf{Disperse}$s these encrypted payloads.
With the broadcast and dispersal complete, the dealer's role in the protocol is concluded --- in fact since information dispersal itself requires only one message from the dealer, the dealer's entire role is just to send messages in the first round.

\paragraph*{2) Share Verification}:
Each party $\P_i$ waits for \msf{ReliableBroadcast} and \msf{Disperse} to complete, and then retrieves just their payload $\{z_{i,k}\}_{k\in[t+1]}$.
The party then attempts to decrypt and validate its shares. If decryption is successful and all the shares are valid, then $\P_i$ signals this by sending an \mtt{OK} message to the other recipients.
The goal of the \mtt{OK} and \mtt{READY} messages (lines~\ref{line:hbavss:okreceived}-\ref{line:hbavss:output1}) is to ensure that if any party outputs a share, then enough correct parties have shares for share recovery to succeed if necessary.

\paragraph*{3) Implicating a faulty dealer}:
If any honest party $\P_i$ receives a share that either fails to decrypt or fails verification, they reveal their secret key by sending $(\mtt{IMPLICATE}, {SK}_i, k)$, which other parties can use to repeat the decryption and confirm that the dealer dispersed invalid data.

\paragraph*{4) Share Recovery}
After a dealer is implicated as faulty, the protocol enters a two-step share recovery process, following the approach of Choudhury et al.~\cite{chp}.
In the first step, parties wait for $t+1$ \mtt{R1} messages from parties that received valid shares originally.
The $\mtt{R1}$ can be checked individually by making use of the homomorphic property of polynomial commitments and witnesses (Section~\ref{sec:polycommit}).
Every correct party $\P_j$ participates in the second phase of share recovery, by reconstructing one column of the bivariate polynomial $\phi(\cdot,j)$.

The second step is the transpose, where each party reconstructs the row polynomial corresponding to its shares.
Since all correct parties send an $\mtt{R2}$ message, even if they did not originally receive valid shares, we can interpolate through ordinary robust decoding rather than using the polynomial commits.

\paragraph*{Batching}
For simplicity, we have described the protocol as sharing a batch of exactly $t+1$ values. However, to reach our desired amortized complexity goals, we need to run multiple instances in parallel in order to offset the overhead of $\msf{Disperse}$ and \msf{ReliableBroadcast}.
The idea is to run several instances of \msf{hbAVSS} such that the \msf{Disperse} and \msf{ReliableBroadcast} protocols are in lockstep, sharing their control messages (i.e., the payloads are concatenated across the several instances). 

\subsection{Security Analysis of \msf{hbAVSS}}

\begin{theorem}
  The hbAVSS protocol (Algorithm~\ref{alg:hbavss}) satifies the requirements of an AVSS protocol (with high probability) when instantiated with an additively homomorphic polynomial commitment scheme $(\msf{Setup},\msf{Commit})$, an AVID protocol $(\msf{Disperse},\msf{Retrieve})$, a reliable broadcast protocol $\msf{ReliableBroadcast}$, and a semantically secure public key encryption scheme $(\msf{Enc},\msf{Dec})$ with a pre-established PKI such that each party $\P_i$ knows their secret key $\msf{SK}_i$ and the public keys $\{ \msf{PK}_i = g^{\msf{SK}_i} \}_{i \in [N]}$ are well known.
\end{theorem}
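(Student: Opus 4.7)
The plan is to verify the three AVSS properties (Correctness, Secrecy, Agreement) separately, reducing each to guarantees of the underlying primitives. As a common preliminary, once \msf{ReliableBroadcast} delivers $\{C_k\}_{k \in [t+1]}$ every correct party holds the same commitments, and polynomial binding pins each $C_k$ to a unique degree-$t$ polynomial $\phi'(\cdot,k)$, jointly defining the unique degree-$(t,t)$ bivariate polynomial $\phi'(\cdot,\cdot)$ against which all honest outputs must be measured. Additive homomorphism of commitments and witnesses makes the extrapolations at lines~\ref{line:hbavss:interpcommit} and~\ref{line:hbavss:interpwitness} genuine commitments/witnesses for $\phi'$ at every column, which is what makes R1 verifications meaningful.

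For \textbf{Correctness}, assume the dealer is honest and chain the termination guarantees of \msf{ReliableBroadcast} and $\msf{Disperse}$: every correct party retrieves and decrypts a valid payload, PolyCommit correctness forces every \msf{VerifyEval} to accept, so each correct party broadcasts \mtt{OK}; the $\geq 2t+1$ honest \mtt{OK}s then drive every correct party through the Bracha-style \mtt{READY} rounds to the output at line~\ref{line:hbavss:output1}, and no implicate ever fires. For \textbf{Secrecy}, reduce to semantic security of $(\msf{Enc},\msf{Dec})$ and the hiding property of $\msf{PolyCommitPed}$: ciphertexts addressed to honest parties are indistinguishable from encryptions of garbage; the commitments are unconditionally hiding via $\msf{aux}$; and honest parties never reveal their own secret keys since no honest share fails validation, so no R1/R2 plaintexts are ever transmitted. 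A standard hybrid replacing honest-party ciphertexts with encryptions of zero and invoking $t$-wise independence of Shamir sharings on the remaining evaluation points shows the adversary's view is independent of $s$.

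The heart of the proof is \textbf{Agreement}, which decomposes into the standard Bracha amplification argument plus a recovery-completeness lemma. Bracha gives: if any honest party ever collects $2t+1$ \mtt{READY}s then every honest party eventually does, so everyone eventually reaches either line~\ref{line:hbavss:output1} or line~\ref{line:hbavss:output2}. The critical sub-lemma is that whenever any honest party passes the ``all shares valid'' check on line~\ref{line:hbavss:output1}, at least $t+1$ honest parties held valid shares originally---traced by unwinding the \mtt{READY} chain back to some honest party that received $2t+1$ \mtt{OK}s, $t+1$ of which come from honest parties who personally validated their shares. This pool of $t+1$ honest share-holders seeds share recovery: once any implicate is confirmed by an honest party (and hence eventually by all honest parties, by point-to-point delivery), those $t+1$ honest parties each send R1 messages that verify against the extrapolated commitments; evaluation binding discards every forged R1; each honest party then interpolates its own column $\phi'(\cdot,i)$, sends R2, and the resulting $\geq 2t+1$ R2s admit robust Reed--Solomon decoding of the row $\phi'(i,\cdot)$ at line~\ref{line:hbavss:output2}. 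Uniqueness of the output across both paths is forced by polynomial binding on the $\{C_k\}$.

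The main obstacle will be stitching the Bracha analysis to the recovery analysis under arbitrary adversarial scheduling. In particular, two points need care: (a) showing that an honest party outputting via recovery alone (without anyone reaching line~\ref{line:hbavss:output1}) still implies $t+1$ \emph{honest} R1 senders, since nominally the $t+1$ accepted R1s could all be faulty---this is closed by tracing R2 delivery (which needs $t+1$ honest R2-senders, each of whom had to reconstruct a column from R1s) back to the share-validity check feeding line~\ref{line:hbavss:sharesvalid}; and (b) that the adversary cannot stall recovery for some honest party after an implicate has fired for another, closed by eventual delivery of both the implicate and the honest R1/R2 traffic over authenticated channels. Making these dependencies airtight, without double-counting or missing a scheduling corner case, is where I expect the bulk of the work to lie.
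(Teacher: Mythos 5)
Your proposal is correct and follows essentially the same route as the paper: correctness by chaining the termination guarantees, secrecy from encryption plus commitment hiding plus rejection of spurious recovery, and agreement decomposed into exactly the paper's two claims (Bracha \texttt{READY}-amplification, and tracing $2t+1$ \texttt{OK}s back to $t+1$ honest holders of valid shares who seed recovery). You supply more detail than the paper does on binding, homomorphic witness interpolation, and the secrecy hybrid, but the underlying argument is the same.
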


\begin{proof}
{\bf Correctness.}
The correctness property follows easily:
If the dealer $D$ is correct, then $\msf{ReliableBroadcast}$ and $\msf{Disperse}$ complete, so each honest party receives their valid shares and outputs them through the ordinary case (line~\ref{line:hbavss:output1}).

\noindent {\bf Secrecy.}
Secrecy also follows easily.
The hiding property of the broadcasted polynomial commitments ensures that they reveal nothing about the shares.
Each party's shares are encrypted prior to dispersal, so the computationally-bounded adversary only obtains the shares that can be decrypted using corrupt parties' secret keys.
Share recovery reveals more information, but if the dealer is correct, then any attempts by the adversary to initiate share recovery will be rejected (line ~\ref{line:hbavss:recoveryreject}).

\noindent {\bf Agreement.}
It is easy to check that parties only output shares that are consistent with the broadcasted polynomial commitments. The challenge is in showing that if any correct party outputs a share, then all of them do.
In the following, assume a correct party has output a share, either through the typical path (line~\ref{line:hbavss:output1}) or through share recovery (line ~\ref{line:hbavss:output2}).
In either case, the broadcast and dispersal must have completed and the party must have received $2t+1$ \mtt{READY} messages (line~\ref{line:hbavss:readyok}).

First, notice the \mtt{READY}-amplification in line~\ref{line:hbavss:readyamplify} plays the same role as in Bracha broadcast:
\begin{claim}
  If any correct party outputs a share, then all correct parties eventually receive $2t+1$ \mtt{READY} messages (line~\ref{line:hbavss:readyok}).
  \label{claim:ready}
\end{claim}
\noindent
If any correct party receives $2t+1$ \mtt{READY} messages, then at least $t+1$ correct parties must have sent \mtt{READY} messages, which causes all correct parties to send \mtt{READY} messages.

Next, the following claim ensures that share recovery can proceed if necessary:
\begin{claim}
  If any correct party outputs a share, then at least $t+1$ correct parties receive valid shares.
  \label{claim:output1}
\end{claim}
\noindent
For \mtt{READY}-amplification to begin, some correct party must have initially sent \mtt{READY} after receiving $2t+1$ \mtt{OK} messages (line~\ref{line:hbavss:okreceived}), thus $t+1$ correct parties must have successfully received valid shares (line~\ref{line:hbavss:sharesvalid}).

Because of the availability and agreement properties of dispersal, every correct party either receives valid shares (and by then Claim~\ref{claim:ready} outputs ordinarily) or else receives an invalid share and initiates share recovery, which by Claim~\ref{claim:output1} is able to proceed.
\end{proof}

\subsection{Performance Analysis of \msf{hbAVSS}}
\label{sec:performance}
We now analyze the performance of \msf{hbAVSS}, focusing primarily on communication complexity. Recall that our goal is to achieve linear amortized communication overhead.
Since one run of this protocol results in $t+1$ secret shared values, we have a budget of $O(n^2)$ communication complexity to keep in mind.

We encounter a challenge: the up-front overhead (independent of payload size) of the \msf{Broadcast} and \msf{Disperse} primitives is either $O(n^2 \log n)$ if Merkle tree checksums are used (as in Cachin and Tessaro~\cite{dispersal}) or $O(n^3)$ if full cross checksums are used (as in Hendricks et al.~\cite{avidfp}). Hence to obtain linear overhead, we consider amortizing these costs across at least $n$ multiple runs executing in parallel, sharing the upkeep.

The broadcast payload consists of $t+1$ commitments, but each one is constant size. Since the overhead of broadcast is $O(n)$, the total communication cost is $O(n^2)$.
The total size of the \msf{Disperse} payload is $n(t+1)$ elements, but the overhead of dispersal is constant, so the total cost is $O(n^2)$. Each party \msf{Retrieve}s only a block of $t+1$ elements, thus all $n$ of them.
The Bracha-like \mtt{OK} and \mtt{READY} messages clearly contribute $O(n^2)$.

The share recovery process clearly involves $O(n^2)$ total communication, but only occurs at most once, since it enables all correct parties to reconstructing their shares at once.

The final challenge is dealing with spurious implications. In the worst case, each honest party may need to validate up to $t$ false implications, each of which requires retrieving a block from the dispersal protocol, thus $O(n^3)$ worst case cost in total. Fortunately, we amortize this in the same way, by running at least $n$ concurrent instances of the protocol. We only need to process at most one implication per party across all the instances. If spurious, further implications are ignored; if confirmed, initiate share recovery in all instances.

%


\bibliographystyle{abbrv}
\bibliography{references}

\end{document}